\title{\LARGE \bf%
	On the complexity of linear systems: an approach via \\ rate distortion theory and emulating systems
	}
\author{Eric Wendel, John Baillieul, and Joseph Hollmann
	\thanks{E. Wendel is with the Division of Systems Engineering, Boston University, Boston MA 02215 USA, and also with The Charles Stark Draper Laboratory, Inc., Cambridge, MA 02139 USA {\tt\small edbw@bu.edu}.}%
	\thanks{J. Baillieul is with the Division of Systems Engineering, Boston University, Boston MA 02215 USA {\tt\small johnb@bu.edu}.}%
	\thanks{J. Hollmann is with The Charles Stark Draper Laboratory, Inc., Cambridge, MA 02139 USA {\tt\small jhollmann@draper.com}.}%
	}
\let\NAT@parse\undefined
\begin{document}

	\maketitle
	
	\begin{abstract}
		We define the \textit{complexity} of a continuous-time linear system to be the minimum number of bits required to describe its forward increments to a desired level of fidelity, and compute this quantity using the rate distortion function of a Gaussian source of uncertainty in those increments. The complexity of a linear system has relevance in control-communications contexts requiring local and dynamic decision-making based on sampled data representations. We relate this notion of complexity to the design of attention-varying controllers, and demonstrate a novel methodology for constructing source codes via the endpoint maps of so-called \textit{emulating systems}, with potential for non-parametric, data-based simulation and analysis of unknown dynamical systems.
	\end{abstract}

	\section{Introduction}


	In certain application contexts, for example distributed control, a stabilizing control signal is transmitted from a control subsystem to an open loop plant over a physical communications channel of finite capacity. This capacity can be shared with other sensors and subsystems, and thus both the number of bits and the time allocated to the control subsystem for processing those bits are limited. The celebrated data rate theorem \cite{baillieulFeedbackDesignsControlling1999, wingshingwongSystemsFiniteCommunication1999} established the minimum required channel capacity, in bits per second, below which a controller cannot stabilize an unstable plant. This result was extended through the use of topological feedback entropy to nonlinear systems in \cite{nairTopologicalFeedbackEntropy2004}. See \cite{nairFeedbackControlData2007} for a review of results on minimum required channel capacities for stability. 

	The data rate theorem and related results are focused on the transmission of control signal information for the singular purpose of system stabilization. However, even within a single system information can be transmitted across physical channels for a plurality of tasks. For example, a rate sensor with a fixed bit depth and sampling frequency necessarily provides information about changes in its sensed quantities at a fixed data rate. These bits may require further compression and accumulation before transmission to a local state estimation subsystem or embedded unsupervised learning algorithm. Are there particular regions of state space in which these sensors and subsystems will be required to operate at higher rates and resolution? Conversely, where and at what times can these data rates be lowered without impacting overall system performance?
	
	
	These are questions about required information rates for estimation and control tasks in \textit{local} and \textit{dynamic} sampled-data contexts. The data rate theorem is concerned with information required for \textit{asymptotic} stability. We may ask the nuanced question: \textit{how many bits are required to describe the changes in state of a continuous-time linear system to a required level of fidelity?} Our main result is a complete answer to this question for continuous-time linear systems subject to process noise, with fidelity measured in the mean square sense. Specifically, we show that the minimum number of bits is given by the rate distortion function of the source of system uncertainty, and construct explicit source codes using the endpoint maps of so-called \textit{emulating systems}. The resulting encodings have advantages for non-parametric, data-based simulation and analysis.

	\parsec{Related literature} A notion of ``information complexity'' was introduced in \cite{tatikondaControlCommunicationConstraints2004} as the data rate required to achieve a control or estimation task in the asymptotic limit for the stochastic, linear time-invariant (LTI), discrete-time system
	\[ x_{k+1} = F\,x_k + w_k \] where $k = 0, \ldots, T-1$ and $w_k$ is iid process noise. In the companion paper \cite{tatikondaStochasticLinearControl2004} the authors  compute the average information content per unit time contained in length $T$ trajectories of this linear system, under the additional restriction that the information in trajectories cannot be encoded at once as a single set (or block) of $T$ observations of system state, but must be encoded iteratively as each state in the trajectory sequence is received. This led to the definition of the so-called \textit{sequential rate distortion function} (SRDF), which does not have a closed-form expression in the asymptotic limit as $T\to\infty$ \cite{stavrouTimeInvariantMultidimensionalGaussian2020}.
	
	The systems of interest to \cite{tatikondaControlCommunicationConstraints2004, tatikondaStochasticLinearControl2004} could be obtained by sampling a stochastic continuous-time system at a constant uniform rate $f_s = 1/\Delta t$. Our interest is in local information content relevant to dynamic execution of control and estimation tasks at potentially non-uniform sampling rates. We are therefore focused on the information content in forward increments of sampled-data representations of continuous-time linear stochastic systems, where $\Delta t$ is allowed to increase or decrease by small amounts. As such, we allow block encoding of the information content within a (small) interval of time and assume negligible encoding delays.
	
	Finally, although the information content of an infinitesimally short trajectory as  $\Delta t \to 0$ is of important theoretical interest \cite{weissmanDirectedInformationCausal2013}, our main results require finite sampling rates.

	\parsec{Contributions and organization} Our full contributions and the organization of the paper are as follows. Our primary, novel contribution is Proposition \ref{prop:lti-code-rate-dx} in Section \ref{sec:lti-code-rate}, which establishes the minimum amount of information required to describe how the state of a continuous-time linear system changes over a finite time interval as given by the rate distortion function of a Gaussian source of uncertainty in its forward increments. As an apparently fundamental property of the class of linear systems we consider, we call it the \textit{complexity} of the linear system (Definition \ref{defn:complexity}).
	
	Our second main result is a proof, for the time-invariant case, of the intuition that increasing the sampling rate of a linear system reduces the number of bits required to describe that system at each sampling time (Proposition \ref{prop:max-code-rate}, Corollary \ref{cor:sampling-rate-capacity}). In contrast to the data rate theorem, this is a statement regarding the information content of a linear system with relevance to coder-controller design that one can make regardless of the stability of the system matrix.

	Both of these results follow from the standard ``reverse water-filling'' interpretation of the rate distortion function for a multivariate Gaussian source of uncertainty. We review rate distortion theory in Section \ref{sec:rate-distortion-theory}.

	Our secondary contributions are presented in Section \ref{sec:emu-sys}, where we construct source codes for the forward increments of an unknown system using the endpoint map of a so-called \textit{emulating system} \cite{sunNeuromimeticLinearSystems2022}. In contrast with  than encoding state by discretizing state space as in \cite{tatikondaControlCommunicationConstraints2004}, using the ``zooming'' quantizers of \cite{brockettQuantizedFeedbackStabilization2000}, or an enumeration of a finite open cover of a compact subset of state space as in \cite{nairTopologicalFeedbackEntropy2004}, our state encoders are constructed by appropriate quantization of directions in the tangent space. We discuss the code rate of two example emulating systems and demonstrate how they enable simulation of new sample paths of the unknown system without assumptions about the model parameters or process noise characteristics.

	\section{A review of rate distortion theory} \label{sec:rate-distortion-theory}

	Let $X\sim p(x)$ be a source of continuous vector-valued random variables taking values on a subset $\mathsf{V}\subseteq \R^n$. A \textit{reproduction} of $X$ is a random variable $\hat{X} := g \circ f(X)$ where the \textit{compressor} $f: \mathsf{V} \to \mathsf{U}$ maps a realization of the random variable $X$ onto its representation in the set $\mathsf{U}$ and the \textit{decompressor} $g:\mathsf{U}\to \mathsf{V}$ maps the representation back. 
	

	The quality or fidelity of a reproduction is defined in terms of averages of a so-called \textit{distortion function} $\rho(x, g\circ f(x))$ \[ \rho: \mathsf{V} \times \mathsf{V} \to \R_{\geq0} \] quantifying the consequences of reproducing the source via the transformation $\hat{X} = g\circ f(x)$. Rate distortion theory places few restrictions on the choice of distortion function. 
	In this paper we will consider only the $L_2$ norm distortion function
	\begin{equation}
		\rho(x, \hat{x}) = \| x - \hat{x} \|_2^2,
	\end{equation}
	and its average, the \textit{mean square error}. 

	The expected value of $\rho$ depends on the joint density $P(\hat{x}, x)$ between the source and its reproduction. With a fixed and given source $p(x)$, this joint is completely determined by a conditional density function $Q(\hat{x} \given{} x)$, 
	which may be viewed as a statistical characterization of the behavior of the as-yet unknown de/compressor functions $f$ and $g$. The mutual information between the source and its reproduction depends on $Q(\cdot \given{} \cdot)$
	\begin{equation*}
		I(\hat{X} ; X) = H(\hat{X}) - H(\hat{X} \given{} X) = I(X ; \hat{X})
	\end{equation*}
	%
	and the \textit{rate distortion function} $R(D)$ is determined by the $Q(\cdot \given{} \cdot)$ that minimizes that mutual information
	\begin{equation*} \begin{aligned}
		R(D) := \min_{Q(\hat{X} \given{} X)} &I(\hat{X} ; X) \\
		 \text{s.t.} \quad &E[\rho(X, \hat{X})] \leq D
	\end{aligned} \end{equation*}


	\parsec{Block codes} Let the density $p(x)$ be a (vector-valued) \textit{memoryless source}, meaning that any discrete sequence of random variables $X^{(i)} \sim p(x)$, $i=1, \ldots, L$ is iid, for any $L>0$. Suppose that $\hat{X}^L := (\hat{X}^{(1)}, \ldots, \hat{X}^{(L)})$ is a reproduction of $X^L$ with each $\hat{X}^{(i)}$ taking values on a finite set of vectors $\mathcal{V} = \{V_i\}_{i=1}^K \subset \R^n$. 
	The elements of $\mathcal{V}$ are called \textit{codevectors} or \textit{symbols} and the set $\mathcal{V}$ itself a \textit{source code} of \textit{size} $K$ and \textit{blocklength} $L$ with \textit{code rate} \[ R := \frac{1}{L}\log_2(K) \] in units of bits per symbol, or simply: $\mathcal{V}$ is a \textit{$(K,L)$-source code}.

	Fix a particular $(K,L)$-source code $\mathcal{V}$ with de/compressor functions $f, g$. We measure the expected performance of this source code by averaging the chosen distortion function $\rho$ over the given source distribution:
	\begin{equation*}
		\bar{D} := \frac{1}{L}\sum_{i=1}^L E\big[\rho(X^{(i)}, g\circ f(X^{(i)}))\big]
	\end{equation*}
	We have exploited the fact that the variables $X^{(i)}$ are iid. If $\bar{D} \leq D$, where $D$ is the maximum allowed distortion, then $\mathcal{V}$ is said to be \textit{admissible}. If $\bar{D} > D$ then it is \textit{inadmissible}.

	The source coding theorem, and its converse, establish the rate distortion function as the minimum possible code rate of any admissible source code. We specialize slightly the statement of the general source coding theorem from \cite{bergerRateDistortionTheory2003} for our purposes.
	\begin{thm}[{\cite[Theorem~7.2.4-5]{bergerRateDistortionTheory2003}}] \label{thm:code-rate}
		Let $X \sim p(x)$ be a memoryless source with maximum admissible distortion $D \geq 0$ and rate distortion function $R(D)$.

		Then, for any $\epsilon > 0$ there exists an admissible source code with average distortion $\bar{D} \leq D+\epsilon$ and rate $R < R(D) + \epsilon$. Conversely, any source code with rate $R < R(D)$ has $\bar{D} > D$ and is inadmissible.
	\end{thm}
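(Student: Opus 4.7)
The plan is to prove the statement in the standard two parts following the Shannon approach: an achievability argument producing, for each $\epsilon > 0$, an admissible $(K,L)$-source code of rate below $R(D) + \epsilon$, and a converse chain of information-theoretic inequalities showing that no source code of rate below $R(D)$ can be admissible. Random coding with joint typicality decoding drives the first part; convexity of the rate distortion function together with the chain rule drives the second.

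For achievability, I would first fix a conditional distribution $Q^*(\hat{x} \given x)$ attaining (or approaching to within, say, $\epsilon/4$) the minimum in the definition of $R(D)$, with induced output marginal $q^*(\hat{x})$. I would then draw a random codebook of $K = \lceil 2^{L(R(D) + \epsilon/2)} \rceil$ codewords iid from the product $\prod_{i=1}^L q^*$ and define the compressor $f$ to map a source block $X^L$ to any index $j$ such that $(X^L, V_j)$ is jointly $\delta$-typical with respect to $p(x)\,Q^*(\hat{x}\given x)$, using an arbitrary default assignment when no such index exists. A joint typicality calculation shows that for $L$ sufficiently large the probability of having no jointly typical codeword vanishes; conditioned on success the empirical distortion concentrates at $E[\rho(X,\hat{X})] \leq D$, while the low-probability failure event contributes at most $O(\epsilon)$ to the average after a standard truncation or expurgation step. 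Averaging over codebooks then produces a deterministic code meeting both the rate and distortion bounds.

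For the converse, let $\mathcal{V}$ be any $(K,L)$-source code with $\bar{D} \leq D$, let $X^L$ be drawn from the memoryless source, and let $\hat{X}^L = g \circ f(X^L)$. Since $\hat{X}^L$ takes at most $K$ values,
\begin{equation*}
L R = \log_2 K \geq H(\hat{X}^L) \geq I(X^L ; \hat{X}^L) \geq \sum_{i=1}^L I(X^{(i)} ; \hat{X}^{(i)}),
\end{equation*}
where the final inequality uses independence of the source coordinates together with the chain rule and the fact that conditioning reduces entropy. Letting $D_i := E[\rho(X^{(i)}, \hat{X}^{(i)})]$, the definition of the rate distortion function yields $I(X^{(i)} ; \hat{X}^{(i)}) \geq R(D_i)$, and convexity together with monotonicity of $R(\cdot)$, via Jensen's inequality, gives $\frac{1}{L}\sum_i R(D_i) \geq R(\bar{D}) \geq R(D)$. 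Hence $R \geq R(D)$; contrapositively, $R < R(D)$ forces $\bar{D} > D$.

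The main obstacle is the achievability half, and specifically the joint typicality lemma applied to a continuous vector-valued source under a mean square distortion constraint: one needs a simultaneous concentration statement for the log-likelihood ratio and for the empirical distortion $\frac{1}{L}\sum_i \rho(X^{(i)}, V^{(i)})$, along with adequate tail control so that rare failure events do not spoil the average. Because this argument is classical and carried out in full in \cite{bergerRateDistortionTheory2003}, the cleanest exposition is to invoke that proof and remark only on its specialization to the Gaussian sources and $L_2$ distortion relevant to the sequel.
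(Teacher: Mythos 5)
The paper does not prove this theorem; it is quoted directly from Berger's text (\cite[Theorems~7.2.4--5]{bergerRateDistortionTheory2003}), so there is no in-paper argument to compare against. Your sketch is the standard two-part Shannon argument that is in fact the one carried out in that reference: achievability via random codebook generation from the $R(D)$-achieving output marginal together with joint-typicality encoding and an expurgation step, and the converse via $LR \geq H(\hat{X}^L) \geq I(X^L;\hat{X}^L) \geq \sum_i I(X^{(i)};\hat{X}^{(i)})$ followed by the per-letter bound $I(X^{(i)};\hat{X}^{(i)}) \geq R(D_i)$, Jensen's inequality using convexity of $R$, and monotonicity. Both halves are correct as written, and you rightly flag the one genuinely delicate point for this paper's setting: joint typicality for an unbounded continuous source under squared-error distortion requires care in controlling the distortion contributed by the atypical/failure event, which is exactly the technical content that Berger supplies and that the paper is implicitly relying on by citation.
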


	The following well-known result (cf. \cite[Theorem~10.3.3]{coverElementsInformationTheory2005}, \cite[equation~(4.5.21)]{bergerRateDistortionTheory2003}) defines the minimum admissible code rate of a memoryless multivariate Gaussian source. We apply it in Section \ref{sec:lti-code-rate} to compute the code rate of the trajectories of a linear system affected by noise. 

	Let $\log^+(x) := \max\{0, \log(x)\}$, $I$ denote the $n\times n$ identity matrix, and $\lambda_i(\Sigma)$ denote the $i$\textsuperscript{th} eigenvalue of a symmetric positive definite matrix $\Sigma$.

	\begin{lem} \label{lem:gaussian-source}
		The rate distortion function of a memoryless source $X \sim p(x) = N(\mu, \Sigma)$ with maximum mean square distortion $D\geq0$ in units of nats per symbol is given by 
		\begin{equation*}
			R(D) = \frac{1}{2}\sum_{i=1}^n \log^+\Big(\frac{\sigma_i^2}{D_i(\theta)}\Big)
		\end{equation*}
		where $\sigma_i^2 := \lambda_i(\Sigma)$, $D_i(\theta) := \min\{\theta, \sigma_i^2\}$ and $\theta\geq 0$ is chosen so that $D = \sum_{i=1}^n D_i(\theta)$. When $D/n < \min_i \{\sigma_i^2\}$ the rate distortion function can be expressed simply as
		\begin{equation*}
			R(D) = \frac{1}{2}\log\det(\Sigma) - \frac{1}{2}\log\det(\frac{D}{n}I)
		\end{equation*}
	\end{lem}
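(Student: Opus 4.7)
The plan is to reduce the multivariate problem to a direct sum of scalar Gaussian rate distortion problems and then perform a Lagrangian (``reverse water-filling'') minimization over the allocation of distortion across coordinates, taking the scalar Gaussian result $R(D) = \tfrac{1}{2}\log^+(\sigma^2/D)$ as a known building block.

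First I would reduce to a zero-mean, diagonal-covariance source. Since $\rho$ is invariant under translating $x$ and $\hat x$ by $\mu$, and differential entropy (hence mutual information) is translation-invariant, we may take $\mu = 0$. Writing the eigendecomposition $\Sigma = U\Lambda U^\top$ with $U$ orthogonal and $\Lambda = \operatorname{diag}(\sigma_1^2,\dots,\sigma_n^2)$, set $Y = U^\top X$ and $\hat Y = U^\top \hat X$. Because $U$ is invertible, $I(X;\hat X) = I(Y;\hat Y)$; because $U$ is orthogonal, $\|X-\hat X\|_2^2 = \|Y - \hat Y\|_2^2$; and $Y \sim N(0,\Lambda)$ has independent coordinates $Y_i \sim N(0,\sigma_i^2)$. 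So the problem reduces to computing $R(D)$ for this product source.

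Next I would show the coordinate-wise decomposition of the rate distortion function. Writing $D_i := E[(Y_i - \hat Y_i)^2]$, the distortion constraint becomes $\sum_i D_i \leq D$. Using independence of the $Y_i$, the chain rule gives $h(Y) = \sum_i h(Y_i)$, and subadditivity $h(Y\mid \hat Y) \leq \sum_i h(Y_i \mid \hat Y)$ together with ``conditioning reduces entropy'' $h(Y_i\mid \hat Y) \leq h(Y_i \mid \hat Y_i)$ yield
\[
I(Y;\hat Y) \;\geq\; \sum_{i=1}^n I(Y_i;\hat Y_i) \;\geq\; \tfrac{1}{2}\sum_{i=1}^n \log^+(\sigma_i^2/D_i),
\]
where the second inequality is the scalar Gaussian converse, proved by the maximum-entropy bound $h(Y_i \mid \hat Y_i) \leq h(Y_i - \hat Y_i) \leq \tfrac{1}{2}\log(2\pi e D_i)$ followed by $I(Y_i;\hat Y_i) = h(Y_i) - h(Y_i\mid\hat Y_i)$. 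The matching upper bound is achieved by the standard backward-channel construction, choosing $\hat Y_i$ and $Y_i - \hat Y_i$ to be independent Gaussians of variances $\sigma_i^2 - D_i$ and $D_i$ when $D_i < \sigma_i^2$, and $\hat Y_i \equiv 0$ otherwise, so the infimum is attained by a product kernel and
\[
R(D) = \min_{D_i \geq 0,\ \sum D_i \leq D} \tfrac{1}{2}\sum_{i=1}^n \log^+(\sigma_i^2/D_i).
\]

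Finally I would solve this finite-dimensional convex program by KKT conditions. Differentiating the Lagrangian with respect to each $D_i$ yields the reverse water-filling allocation $D_i = \min\{\theta,\sigma_i^2\}$ with $\theta \geq 0$ chosen so that $\sum_i D_i = D$, which is exactly $D_i(\theta)$ in the statement. When $D/n < \min_i \sigma_i^2$, the choice $\theta = D/n$ satisfies the constraints with all coordinates active ($D_i = D/n$ for every $i$), so
\[
R(D) = \tfrac{1}{2}\textstyle\sum_i \log(\sigma_i^2) - \tfrac{n}{2}\log(D/n) = \tfrac{1}{2}\log\det(\Sigma) - \tfrac{1}{2}\log\det\!\bigl(\tfrac{D}{n}I\bigr).
\]
The main obstacle I anticipate is the converse: justifying that an arbitrary, possibly correlated, joint kernel $Q(\hat x \mid x)$ cannot outperform the product-Gaussian achievability. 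The key is that subadditivity of conditional entropy combined with the coordinate-wise maximum-entropy bound produces a separable lower bound that the product construction meets with equality, and also accounts for the $\log^+$ truncation when some eigenvalue $\sigma_i^2$ lies below the water level $\theta$.
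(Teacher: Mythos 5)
The paper does not prove Lemma~\ref{lem:gaussian-source}; it states it as a well-known result and cites \cite[Theorem~10.3.3]{coverElementsInformationTheory2005} and \cite[eq.~(4.5.21)]{bergerRateDistortionTheory2003}. Your proof is correct and follows the standard argument given in those references: orthogonal diagonalization to reduce to a product of independent scalar Gaussian sources, the converse via subadditivity of conditional entropy plus the maximum-entropy bound $h(Y_i - \hat Y_i) \le \tfrac12\log(2\pi e D_i)$, achievability via the backward Gaussian test channel (with $\hat Y_i \equiv E[Y_i]$ when $D_i \ge \sigma_i^2$), and a Lagrangian/KKT optimization over the per-coordinate distortion allocation to obtain the reverse water-filling rule $D_i = \min\{\theta,\sigma_i^2\}$. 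The specialization to $D/n < \min_i\sigma_i^2$ is also handled correctly: with all coordinates active, $\theta = D/n$ and the sum collapses to $\tfrac12\log\det\Sigma - \tfrac12\log\det(\tfrac{D}{n}I)$. One small point worth making explicit: the minimum over $Q(\hat x \mid x)$ is indeed attained by a \emph{product} kernel because the converse bound is separable and the product backward channel meets it with equality, so no correlated kernel can do better; you flag this as the ``main obstacle'' but your chain of inequalities already resolves it, so no gap remains.
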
 %
	This is the classical ``reverse water-filling'' characterization of the minimum admissible code rate for a memoryless Gaussian source, and captures the intuitive result that no bits need be allocated by an optimal compressor to describe any principal components of the source signal whose variance falls below the ``water-level'' or threshold $\theta$.
	
	To simplify the following exposition we express $R(D)$ in nats per symbol for small admissible distortions satisfying $D/n < \min_i\{\lambda_i(\Sigma)\}$, with the understanding that $R(D)$ for large $D$ is obtained by reverse water-filling.

	\section{Code rate and complexity of linear systems}
	\label{sec:lti-code-rate}

	\begin{figure}[t]
		\centering
		\includegraphics[width=\linewidth]{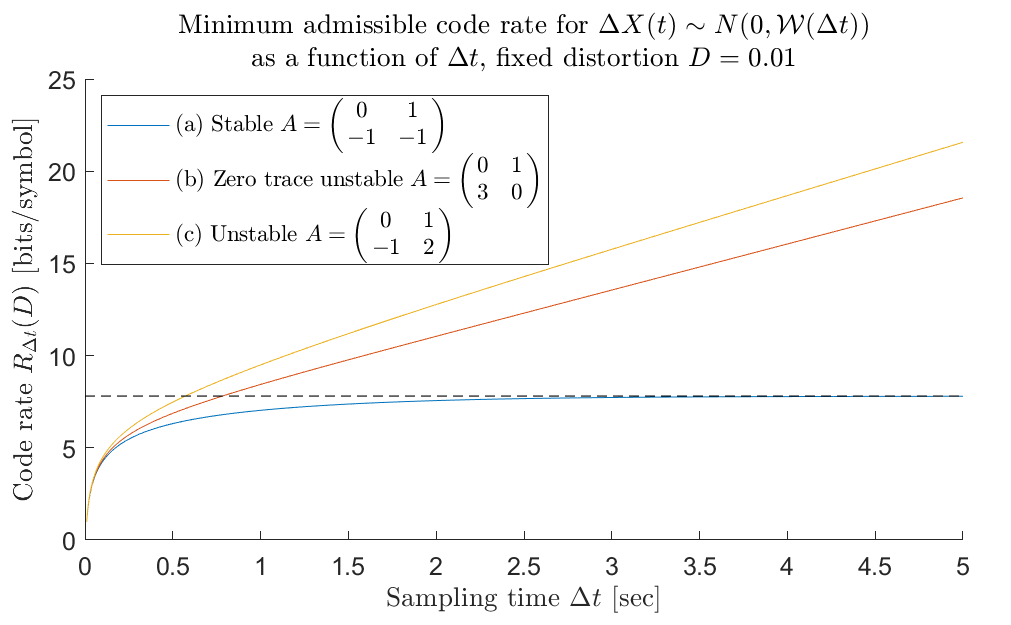}
		\caption{The minimum code rate $R_{\Delta t}(0.01)$ changes with sampling rate $1/\Delta t$ for three different LTI systems with different stability properties. 
		The code rate for all systems starts from 0, corresponding to the trivial change in state $\Delta X(t) = X(t)$. (a) The largest required admissible code rate for this stable LTI system is about $7.8$ bits/symbol, shown by the dashed black horizontal line; (b, c) These systems have no equilibrium solution to (\ref{eqn:lyap-eqn}) and their code rates increase without bound with decreasing sampling rate, corresponding to the accumulation of uncertainty (\ref{eqn:noise-gramian-LTI}) in the change in state due to instability.}
		\label{fig:ex-lti-code-rates-dx}
		\vspace*{-1em}
	\end{figure}

	
	Consider the continuous-time It\^{o} stochastic differential equation
	\begin{equation} \label{eqn:stoch-lin-sys}
		dx(t) = A(t)x(t)\,dt + dw(t),
	\end{equation}
	where $w(t)$ is a $n$-dimensional Brownian motion process with constant covariance $N$. We are interested in the information content in the next sample of system state $X(t+\Delta t)$ given $ X(t)$. Define the ``change of state'' of (\ref{eqn:stoch-lin-sys}) to be the random variable \[ \Delta X(t) := X(t+\Delta t) - X(t) \given{} X(t) \] conditioned on $X(t)$. The change of state is given by variation of constants
	\begin{equation*}
		\Delta X(t) = \Delta \mu(t) + \int^{t+\Delta t}_t\hspace*{-1.5em} \Phi(t+\Delta t, \tau) \,dw(\tau), 
	\end{equation*}
	where the integral on the righthand side is an It\^{o} integral. It follows, cf. \cite[pg.~131]{jazwinskiStochasticProcessesFiltering1970}, that the mean is $\Delta\mu(t) := E[\Delta X(t)] = \big(\Phi(t+\Delta t, t)-I\big)X(t)$ and the covariance $\Cov(\Delta X(t)) = \Cov(X(t+\Delta t))$ is given by the Gramian
	\begin{equation} \label{eqn:noise-gramian}
		\mathcal{W}_t(\Delta t) := \int^{t+\Delta t}_t\hspace*{-1.5em} \Phi(t+\Delta t, \tau) \, N \, \Phi(t+\Delta t, \tau) \,d\tau .
	\end{equation} %
	The notation $\Delta X(t)$ and $\Delta\mu(t)$ is convenient only when the sample time $\Delta t$ is fixed. Below we compute the minimum admissible code rate of a source code for $\Delta X(t)$ when $\Delta t$ can vary. Applying Theorem \ref{thm:code-rate} and Lemma \ref{lem:gaussian-source}, we see that the minimum code rate depends on how the covariance $\mathcal{W}_t(\Delta t)$ varies with $\Delta t$.
	
	\begin{prop} \label{prop:lti-code-rate-dx}
		Let $\Delta X(t) \sim N(\Delta\mu(t), \mathcal{W}_t(\Delta t))$, with covariance given by (\ref{eqn:noise-gramian}), be a source of discrete-time changes of state of system (\ref{eqn:stoch-lin-sys}) over the fixed interval $[t, t+\Delta t]$, with $t\geq 0$, $\Delta t > 0$. The minimum admissible code rate of a source code for $\Delta X(t)$ when $0 \leq D < n \min_i\{\lambda_i(\mathcal{W}_t(\Delta t))\}$ is
		\begin{equation} \label{eqn:complexity}
			R_{t,\Delta t}(D) := \frac{1}{2}\ln\det \mathcal{W}_t(\Delta t) - \frac{1}{2}\ln\det(\frac{D}{n}I)
		\end{equation}
		and otherwise obtained by reverse water-filling on the eigenvalues of $\mathcal{W}_t(\Delta t)$.
	\end{prop}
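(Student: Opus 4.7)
The plan is to reduce the statement directly to Theorem \ref{thm:code-rate} and Lemma \ref{lem:gaussian-source} by showing that $\Delta X(t)$, viewed as a random variable for fixed $t$ and fixed $\Delta t$, is Gaussian with covariance $\mathcal{W}_t(\Delta t)$, and that only this covariance enters the rate distortion function.

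First, I would verify that $\Delta X(t)$ is Gaussian. The expression
\begin{equation*}
  \Delta X(t) = \Delta\mu(t) + \int_t^{t+\Delta t}\Phi(t+\Delta t,\tau)\,dw(\tau)
\end{equation*}
exhibits $\Delta X(t)$ as an affine function of an It\^{o} integral of a deterministic matrix-valued integrand against a Brownian motion increment; this is a Gaussian random vector by standard properties of It\^{o} integrals. The mean is then $\Delta\mu(t)$ and the covariance is the noise Gramian $\mathcal{W}_t(\Delta t)$ in (\ref{eqn:noise-gramian}), which is symmetric positive definite provided $N\succ 0$ and $\Delta t > 0$. Thus $\Delta X(t)\sim N(\Delta\mu(t),\mathcal{W}_t(\Delta t))$, as claimed in the hypothesis.

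Next, I would observe that the rate distortion function under squared-error distortion is translation invariant: if $\hat{X} = g\circ f(X)$ is any reproduction of $X$, then defining $\tilde{X}:=X-\Delta\mu(t)$ and $\tilde{g}(u):=g(u)-\Delta\mu(t)$ gives a reproduction $\tilde{X}\sim N(0,\mathcal{W}_t(\Delta t))$ with the same mean squared distortion and the same mutual information between source and reproduction. Hence the minimization defining $R(D)$ depends on $p(x)$ only through the covariance $\mathcal{W}_t(\Delta t)$, and we may apply Lemma \ref{lem:gaussian-source} with $\Sigma=\mathcal{W}_t(\Delta t)$. The block-coding interpretation of Theorem \ref{thm:code-rate} treats an iid sequence of realizations of $\Delta X(t)$ (for the fixed $t,\Delta t$) as the memoryless source; Theorem \ref{thm:code-rate} then identifies $R_{t,\Delta t}(D)$ with the minimum admissible code rate.

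It remains to use the hypothesis $0\le D < n\min_i\{\lambda_i(\mathcal{W}_t(\Delta t))\}$ to select the closed-form branch of Lemma \ref{lem:gaussian-source}. Under this inequality, the reverse water-filling parameter $\theta$ satisfying $D=\sum_i D_i(\theta)$ is strictly below every eigenvalue of $\mathcal{W}_t(\Delta t)$, so $D_i(\theta)=\theta=D/n$ for every $i$ and $\log^+(\sigma_i^2/D_i(\theta))=\ln(\sigma_i^2 n/D)$. Substituting and collecting the sum of logs as $\ln\det \mathcal{W}_t(\Delta t)$ yields formula (\ref{eqn:complexity}); otherwise the general reverse water-filling expression from Lemma \ref{lem:gaussian-source} applies verbatim.

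The only real subtlety, and the step I would treat most carefully, is the invocation of the source coding theorem in this setting. Theorem \ref{thm:code-rate} assumes a memoryless iid source, while here the physical interpretation is of a \emph{single} increment of one sample path of (\ref{eqn:stoch-lin-sys}). This is resolved by viewing $\Delta X(t)$ as a random variable sampled over the probability space of noise realizations for fixed $t,\Delta t$; one then obtains iid copies by considering independent trajectories of the noise, and the minimum-rate statement refers to the asymptotic block coding limit over such an ensemble, as is standard in applications of rate distortion theory to stochastic systems.
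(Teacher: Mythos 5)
Your proof is correct and takes essentially the same approach as the paper, which simply invokes Theorem~\ref{thm:code-rate} and Lemma~\ref{lem:gaussian-source} without spelling out the details; you fill in exactly those details (Gaussianity of $\Delta X(t)$, translation invariance of the squared-error rate distortion function, the identification of the water-filling level $\theta=D/n$ in the low-distortion regime, and the iid ensemble interpretation needed for the block-coding theorem), all of which are accurate.
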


	In the time-invariant case, $A(t)=A$, the covariance $\mathcal{W}_t(\Delta t) =: \mathcal{W}(\Delta t)$ depends only on the sampling interval:
	\begin{equation} \label{eqn:noise-gramian-LTI}
		\mathcal{W}(\Delta t) = \int_0^{\Delta t} e^{A(\Delta t-\tau)} \, N \, e^{A^T(\Delta t - \tau)} \,d\tau
	\end{equation}
	As a result, the minimum admissible code rate $R_{t,\Delta t}(D) =: R_{\Delta t}(D)$ only varies with the sampling interval $\Delta t$. 
	The following result relates the stability of a LTI system to its minimum admissible code rate.
	
	\begin{prop} \label{prop:max-code-rate}
		If linear system (\ref{eqn:stoch-lin-sys}) is asymptotically stable and time-invariant, such that $A(t)=A$ is Hurwitz, then the rate distortion function for the source $\Delta X(t) \sim N(\Delta\mu(t), \mathcal{W}(\Delta t))$ is upper-bounded for all $D\geq 0$ as
		\begin{equation*}
			R_\infty(D) \geq R_{\Delta t}(D)
		\end{equation*}
		where $R_\infty(D)$ is the rate distortion function for the Gaussian source whose covariance $\mathcal{W}_\infty$ is the unique equilibrium solution to the continuous-time Lyapunov equation
		\begin{equation} \label{eqn:lyap-eqn}
			A\,\mathcal{W}_\infty + \mathcal{W}_\infty\,A^T + N = 0.
		\end{equation}
	\end{prop}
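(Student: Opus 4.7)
\emph{Proof plan.} My plan is to show that the covariance $\mathcal{W}(\Delta t)$ is dominated by $\mathcal{W}_\infty$ in the Loewner (positive semidefinite) order, transfer this to a componentwise inequality on eigenvalues by Weyl's monotonicity theorem, and then use the reverse water-filling characterization of Lemma~\ref{lem:gaussian-source} to lift the bound up to the rate distortion functions themselves.

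Since $A$ is Hurwitz, the unique equilibrium solution of (\ref{eqn:lyap-eqn}) admits the standard integral representation $\mathcal{W}_\infty = \int_0^\infty e^{As}\, N\, e^{A^T s}\,ds$. Applying the change of variable $s = \Delta t - \tau$ in (\ref{eqn:noise-gramian-LTI}) rewrites the finite-horizon Gramian as $\mathcal{W}(\Delta t) = \int_0^{\Delta t} e^{As}\, N\, e^{A^T s}\,ds$, so
\begin{equation*}
\mathcal{W}_\infty - \mathcal{W}(\Delta t) = \int_{\Delta t}^\infty e^{As}\, N\, e^{A^T s}\,ds \succeq 0,
\end{equation*}
since the integrand is symmetric positive semidefinite for every $s$. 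Weyl's monotonicity theorem for symmetric matrices then yields $\lambda_i(\mathcal{W}(\Delta t)) \leq \lambda_i(\mathcal{W}_\infty)$ for each $i$ under a common ordering of eigenvalues.

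Next I would apply Lemma~\ref{lem:gaussian-source} to conclude $R_\infty(D) \geq R_{\Delta t}(D)$ for every $D \geq 0$. When $D$ lies in the small-distortion regime of (\ref{eqn:complexity}) for both covariances, the bound is immediate from $\det \mathcal{W}(\Delta t) \leq \det \mathcal{W}_\infty$. For larger $D$, I would parametrize both rate distortion functions by a common water level $\theta$: the componentwise eigenvalue inequality gives both $\sum_i \min\{\theta, \lambda_i(\mathcal{W}(\Delta t))\} \leq \sum_i \min\{\theta, \lambda_i(\mathcal{W}_\infty)\}$ and $\sum_i \log^+(\lambda_i(\mathcal{W}(\Delta t))/\theta) \leq \sum_i \log^+(\lambda_i(\mathcal{W}_\infty)/\theta)$. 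Consequently, equalizing distortion at a target $D$ forces the water level for the larger-covariance problem to be no larger than that for the smaller one, which only further increases its rate.

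The main obstacle is this last step: the bookkeeping of which principal components lie above or below the water level can differ between the two covariances, and a clean monotonicity argument requires care at the piecewise boundaries where the set of active indices changes. However, both $D(\theta) = \sum_i \min\{\theta,\lambda_i\}$ and $R(\theta) = \tfrac{1}{2}\sum_i \log^+(\lambda_i/\theta)$ are continuous and monotone in $\theta$, and they trace out the reverse water-filling curve; verifying the rate bound then reduces to comparing these two parametrized curves piecewise, which is routine if slightly tedious.
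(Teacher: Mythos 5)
Your proof is correct, and it takes a genuinely different route from the paper's. The paper fixes $D$, applies the Leibniz rule to get the Lyapunov ODE
\begin{equation*}
\frac{d\mathcal{W}(\Delta t)}{d\Delta t} = A\,\mathcal{W}(\Delta t) + \mathcal{W}(\Delta t)\,A^T + N,
\end{equation*}
invokes uniqueness and positive-definiteness of the equilibrium $\mathcal{W}_\infty$ for Hurwitz $A$, and then argues that $\tilde{R}_{\Delta t}(D) = \tfrac{1}{2}\ln\det\mathcal{W}(\Delta t) - \tfrac{1}{2}\ln\det(\tfrac{D}{n}I)$ has nonnegative derivative $\tfrac{1}{2}\tr\bigl(\mathcal{W}^{-1}(\Delta t)\,\dot{\mathcal{W}}(\Delta t)\bigr)$ and increases monotonically from $-\infty$ up to $R_\infty(D)$; the bound then follows by taking $\max(0,\cdot)$. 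In contrast, you bypass the ODE entirely: you use the integral representations $\mathcal{W}_\infty = \int_0^\infty e^{As}Ne^{A^Ts}\,ds$ and $\mathcal{W}(\Delta t) = \int_0^{\Delta t} e^{As}Ne^{A^Ts}\,ds$ to get the Loewner inequality $\mathcal{W}(\Delta t) \preceq \mathcal{W}_\infty$ directly, then Weyl monotonicity for the eigenvalues, and finally a water-level comparison to transfer the eigenvalue ordering to the rate distortion functions.

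Your approach has a concrete advantage: it cleanly handles \emph{all} $D \geq 0$, whereas the paper's argument manipulates only the small-distortion expression $\tilde{R}_{\Delta t}(D)$ and then asserts $R_{\Delta t}(D) = \max(0, \tilde{R}_{\Delta t}(D))$; strictly speaking that identity only holds when $D/n$ is below the smallest eigenvalue, so the paper's proof implicitly leaves the intermediate-distortion regime (where some but not all components are below the water level) to the reader. The ``obstacle'' you flag at the end is not really a gap: with $D_\sigma(\theta) = \sum_i \min\{\theta, \sigma_i^2\}$ nondecreasing in $\theta$ and $R_\sigma(\theta) = \tfrac{1}{2}\sum_i \log^+(\sigma_i^2/\theta)$ nonincreasing in $\theta$, your observation $D_\sigma(\theta) \leq D_\Sigma(\theta)$ forces $\theta_\Sigma \leq \theta_\sigma$ at equal target distortion, whence $R_\Sigma(\theta_\Sigma) \geq R_\Sigma(\theta_\sigma) \geq R_\sigma(\theta_\sigma)$, with no casework on which indices are active. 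So your argument is if anything tighter than the published one. One thing the paper's ODE argument buys that yours does not state explicitly is the monotone convergence of $R_{\Delta t}(D)$ to $R_\infty(D)$ as $\Delta t \to \infty$; but that, too, falls out of your method by replacing $\infty$ with $\Delta t_2 > \Delta t_1$ in the integral comparison.
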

	\begin{proof}
		Applying the Leibniz rule to (\ref{eqn:noise-gramian-LTI}) yields the matrix differential equation
		\begin{equation} \label{eqn:W-diff-eqn}
			\frac{d \mathcal{W}(\Delta t) }{ d\Delta t } = A \mathcal{W}(\Delta t) + \mathcal{W}(\Delta t) A^T + N
		\end{equation}
		Since $A$ is Hurwitz, by \cite[Theorem~11.3]{brockettFiniteDimensionalLinear2015} there exists a unique, symmetric positive-definite equilibrium solution $\mathcal{W}_\infty$ to this differential equation. 
		For fixed $D$, \[ \tilde{R}_{\Delta t}(D) := \frac{1}{2}\ln\det \mathcal{W}(\Delta t) - \frac{1}{2}\ln\det(\frac{D}{n}I) \]
		is a smooth function of $\Delta t$ with derivative
		\begin{equation*}
			\dot{\tilde{R}}_{\Delta t}(D) := \frac{d R_{\Delta t}(D)}{d\Delta t} = \tr \Big( \mathcal{W}^{-1}(\Delta t) \frac{d \mathcal{W}(\Delta t)}{d\Delta t} \Big).
		\end{equation*}
		Thus, it has a unique equilibrium and is non-decreasing from 
		\begin{gather*}
			-\infty = \lim_{\Delta t \downarrow 0}\tilde{R}_{\Delta t}(D) \ \text{ to } \  R_\infty(D) := \lim_{\Delta t \to \infty} \tilde{R}_{\Delta t}(D).
		\end{gather*}
		It follows that $R_{\Delta t}(D) = \max(0, \tilde{R}_{\Delta t}(D))$ is also non-decreasing and asymptotically reaches $R_\infty(D)$. 
	\end{proof}

	\parsec{Attention and complexity} In \cite{brockettMinimumAttentionControl1997} a framework for the design of controllers capable of operating both with and without state feedback, called \textit{attention-varying control}, was developed under the premise that open-loop control functions admit simple algorithmic implementations, but closed-loop control functions require more complex implementations and computing resources, and are therefore undesirable when an open-loop controller will suffice. These ideas were further explored in \cite{baillieulFeedbackDesignsControlling1999, baillieulFeedbackCodingInformationbased2002}.

	Consider an application in which we are are to communicate the change in state of system (\ref{eqn:stoch-lin-sys}) over a memoryless channel with a fixed maximum information capacity $C$ in bits per channel use. The change in system state over a time interval $\Delta t$ is received by a control subsystem whose performance degrades unacceptably if the mean square error in $\Delta X(t)$ is above a prescribed limit $D$. Assume also that we are given a compressor for $\Delta X(t)$ that is efficient in the sense that for every $t$ and $\Delta t$, it operates with an expected average distortion $\bar{D}$ at a rate $\bar{R} = R_{t,\Delta t}(\bar{D}) + \epsilon$ for some small $\epsilon > 0$. By the \textit{source-channel separation theorem} \cite[Theorem~10.4.1]{coverElementsInformationTheory2005}, the mean square performance level $\bar{D}$ is achievable over the given channel with capacity $C$ if and only if $R_{t,\Delta t}(\bar{D}) + \epsilon < C$. %
	
	
	If system (\ref{eqn:stoch-lin-sys}) is time-invariant, the system matrix $A$ is also Hurwitz, and $R_\infty(\bar{D}) < C$ then by Proposition \ref{prop:max-code-rate} the control subsystem is free to operate at a sufficiently low rate: any sufficiently slow sampling rate $f_s:=1/\Delta t$ suffices to meet the distortion requirement, cf. case (a) in Figure \ref{fig:ex-lti-code-rates-dx}. The controller can operate in an essentially open-loop mode.

	On the other hand, suppose that the system matrix $A$ is not Hurwitz and does not have an equilibrium solution to the Lyapunov equation (\ref{eqn:lyap-eqn}). Then the minimum admissible code rate is unbounded with increasing $\Delta t$ and there exists a sampling rate below which the channel cannot support the controller's performance requirements. For example, fix the channel capacity at $8$ bits/use, and consider the control systems with code rates (complexities) shown in Figure \ref{fig:ex-lti-code-rates-dx}(b, c). By the source-channel separation theorem, in order to meet a maximum distortion requirement of $D=0.01$ 
	the sampling rate must be increased to at least $f_s = 1.63$ Hz for unstable system (c), and $f_s = 1.2$ Hz for system (b). This discussion motivates the following result.

	\begin{cor} \label{cor:sampling-rate-capacity}
		Let $C$ be the capacity of a given channel over which reproductions of the change in state $\Delta X(t) \sim N(\Delta\mu(t), \mathcal{W}(\Delta t))$ for a time-invariant system (\ref{eqn:stoch-lin-sys}), $A(t)=A$, are to be transmitted. If Lyapunov equation (\ref{eqn:lyap-eqn}) has no solution then there exists a sufficiently fast finite sampling rate $f_s < \infty$ for which $C > R_{1/f_s}(D)$, for any $D\geq 0$.
	\end{cor}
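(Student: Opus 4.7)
The plan is to show that for any fixed $D > 0$, the complexity $R_{\Delta t}(D)$ tends to $0$ as $\Delta t \downarrow 0$, so that any sampling rate above a suitable finite threshold satisfies $R_{1/f_s}(D) < C$. (The boundary case $D = 0$ is best read as $D > 0$, since a non-degenerate Gaussian source requires unbounded rate for lossless reproduction, and the case $D = 0$ would trivialize the claim.)

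First I would establish that the Gramian $\mathcal{W}(\Delta t)$ is continuous in $\Delta t$ with $\mathcal{W}(0) = 0$. This follows immediately from (\ref{eqn:noise-gramian-LTI}); equivalently, $\mathcal{W}(\Delta t)$ is the unique solution of the matrix ODE (\ref{eqn:W-diff-eqn}) with zero initial condition. In contrast to Proposition \ref{prop:max-code-rate}, this derivation does not require $A$ to be Hurwitz, so it applies verbatim in the present setting where (\ref{eqn:lyap-eqn}) has no solution. In particular, each eigenvalue $\lambda_i(\mathcal{W}(\Delta t))$, and hence $\tr \mathcal{W}(\Delta t)$, tends to $0$ as $\Delta t \downarrow 0$.

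Next I would invoke Lemma \ref{lem:gaussian-source} in its reverse water-filling form. Once $\tr \mathcal{W}(\Delta t) \leq D$, the trivial reproduction $\hat{X} = \Delta\mu(t)$ already achieves expected mean square distortion $\tr \mathcal{W}(\Delta t) \leq D$ at zero rate; equivalently, taking the water level $\theta \geq \max_i \lambda_i(\mathcal{W}(\Delta t))$ makes every $\log^+(\sigma_i^2/D_i(\theta)) = 0$. Thus there exists $\Delta t^* > 0$ such that $R_{\Delta t}(D) = 0$ for all $0 < \Delta t \leq \Delta t^*$, and any finite $f_s > 1/\Delta t^*$ then gives $C > 0 = R_{1/f_s}(D)$, proving the corollary.

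I do not anticipate a substantive obstacle. The result is essentially the small-$\Delta t$ mirror of Proposition \ref{prop:max-code-rate}: the Hurwitz hypothesis there was used only to pin down the large-$\Delta t$ asymptote, whereas the $\Delta t \downarrow 0$ behaviour of $\mathcal{W}(\Delta t)$, and hence of $R_{\Delta t}(D)$, is governed entirely by the initial condition $\mathcal{W}(0) = 0$ and is insensitive to whether (\ref{eqn:lyap-eqn}) admits a solution.
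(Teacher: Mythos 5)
Your proof is correct, and it is slightly cleaner than the paper's. The paper works with the unconstrained expression $\tilde{R}_{\Delta t}(D) = \tfrac{1}{2}\ln\det\mathcal{W}(\Delta t) - \tfrac{1}{2}\ln\det(\tfrac{D}{n}I)$, notes that under the no-equilibrium hypothesis it is strictly increasing from $-\infty$ (as $\Delta t \downarrow 0$, since $\det\mathcal{W}(\Delta t)\to 0$) without bound, and concludes that $R_{\Delta t}(D) = \max(0,\tilde R_{\Delta t}(D))$ vanishes on an initial interval $(0,\Delta\tau)$. You instead go straight to the source: $\mathcal{W}(\Delta t)\to 0$ continuously, so once $\tr\mathcal{W}(\Delta t)\le D$ the reverse water-filling level $\theta$ exceeds every eigenvalue and $R_{\Delta t}(D)=0$. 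Both arguments land on the same fact ($R_{\Delta t}(D)=0$ for small $\Delta t$, hence below $C$), but yours avoids the detour through $\tilde R$ and the $\log\det$ expression, which strictly speaking is only the rate--distortion function when $D/n < \min_i \lambda_i(\mathcal{W}(\Delta t))$. Your treatment of the endgame via $\tr\mathcal{W}$ is the more careful reading of Lemma \ref{lem:gaussian-source}. You are also right to flag that the hypothesis that the Lyapunov equation has no solution is never actually used: the small-$\Delta t$ behaviour of $\mathcal{W}$ is governed by $\mathcal{W}(0)=0$ alone, so the conclusion holds for any $A$. (In the paper, that hypothesis serves mainly to contrast with Proposition \ref{prop:max-code-rate}, where a Hurwitz $A$ guarantees that \emph{every} sampling rate is admissible once $C > R_\infty(D)$.) Both your proof and the paper's tacitly require $C>0$ and $D>0$, and your remark about the degenerate $D=0$ case is appropriate.
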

	\begin{proof}
		If there does not exist a solution $\mathcal{W}_\infty$ to (\ref{eqn:lyap-eqn}) then (\ref{eqn:W-diff-eqn}) has no equilibria and then $\tilde{R}_{\Delta t}(D)$ is strictly increasing without bound as $\Delta t\to\infty$. Then, either $R_{\Delta t}(D) = \max(0, \tilde{R}_{\Delta t}(D)) = 0$ for all $\Delta t\geq 0$, or there exists a finite nonzero $\Delta \tau$ for which $R_{\Delta t}(D) > 0$ for all $\Delta t \geq \Delta \tau$.
	\end{proof}%
%

	The source-channel separation theorem requires the control subsystem to increase its sampling rate in order to maintain a required mean square performance over the fixed capacity communications channel. If the rate of channel uses $f_s$ increases then, intuitively speaking, the control subsystem is ``more attentive'' to the system's change in state and operating in an essentially closed-loop mode. We propose $f_s$ as a discrete-time measure of ``attention'' consistent with the premise of attention-varying control \cite{brockettMinimumAttentionControl1997}.

	\begin{defn} \label{defn:complexity}
		The \textit{complexity} of the continuous-time linear system (\ref{eqn:stoch-lin-sys}) is the minimum admissible code rate given by the rate distortion function $R_{t, \Delta t}(D)$, equation (\ref{eqn:complexity}), of the source of uncertain changes in state $\Delta X(t) \sim N(\Delta\mu(t), \mathcal{W}_t(\Delta t))$.
	\end{defn}
	The complexity of a discrete-time system varies with 
	\begin{itemize}
		\item \textit{time} $t\geq 0$, unless of course (\ref{eqn:stoch-lin-sys}) is time-invariant,
		\item \textit{fidelity} or mean square distortion $D\geq 0$, and
		\item \textit{attention} or sampling rate $f_s=1/\Delta t$, $\Delta t> 0$.
	\end{itemize}

	\section{Emulating systems and source families}
	\label{sec:emu-sys}

	In this section we construct source codes for $\Delta X(t)$ using the endpoint map of a special class of control systems called \textit{emulating systems}. We define the code rate of an emulating system by way of example, and illustrate the capability of emulating systems for data-based, ``model-free'' simulation of unknown dynamical systems.
	
	\begin{defn}[Emulating systems] \label{defn:emu-sys}
		An \textit{emulating system} is the control system on $\R^n$ 
		\begin{equation} \label{eqn:emu-sys}
			\dot{x}(t) = \sum_{i=1}^K V_i(x(t)) u_i(t)
		\end{equation}
		associated with a \textit{source family} of $K$ autonomous vector fields $\mathcal{V}=\{V_i\}_{i=1}^K$ and an \textit{admissible control set} $\mathcal{U}$ of \textit{binary control functions} taking values on the set $\{0,1\}^K$ with finitely many switchings.
	\end{defn}
	
	


	Recall that the \textit{flow} of vector field $V_i$ is a diffeomorphism $\varphi^{(i)}_{\delta t}: \R^n \to \R^n$ mapping an initial condition $x(t)$ to the solution $x(t+\delta t)$ of the differential equation $\dot{x} = V_i(x)$.

	The \textit{endpoint map} of control system (\ref{eqn:emu-sys}) is the function $\phi: \R^n \times \R_{\geq 0} \times \mathcal{U} \to \R^n$ mapping an admissible control function $u \in \mathcal{U}$ to the solution $x(t+\Delta t)$ of system (\ref{eqn:emu-sys}) with initial condition $x(t)$:
	\begin{equation} \label{eqn:emu-sys-endpoint-map}
		\phi(x(t), \Delta t, u) = x(t) + \sum_{i=1}^K \int_t^{t+\Delta t}\hspace*{-1.5em} V_i(x(\tau))\, u_i(\tau)\, d\tau .
	\end{equation}
	
	The \textit{attainable set} from $x(t) \in \R^n$ is the subset of points reachable via the endpoint map from $x(t)$ by any admissible control function in $\mathcal{U}$ for some time $\tau \in [t, t+\Delta t]$. In the discussion below we find it convenient to refer instead to the set of ``attainable increments''
	\begin{equation*}
		\Att{x(t)}{\mathcal{U}, \mathcal{V}}(\Delta t) = \{ \phi(x(t), \tau, u)-x(t) \suchthat{} t \in [t, t + \Delta t], u \in \mathcal{U} \}.
	\end{equation*} %
	Lossless compression is possible if $\Delta x(t) \in \Att{x(t)}{\mathcal{U}, \mathcal{V}}(\Delta t)$.

	\subsection{The code rate of an emulating system}
	
	In \cite{sunNeuromimeticLinearSystems2022} the value of an admissible control function is called an \textit{activation pattern} and, as in Definition \ref{defn:emu-sys}, forms a discrete and finite set. An admissible control function changes its activation pattern at a discrete set of \textit{switching times}. Below we compute the code rate of two example emulating systems whose switching time sets admit parameterizations by discrete and finite data.

	\parsec{Elementary activation patterns} Let $\|\cdot\|_1$ denote the $L_1$ norm of a vector. Consider a source family $\mathcal{V}$ of $K$ autonomous vector fields with admissible controls \[ \mathcal{U} = \{ u(\cdot) \suchthat u(t) \in \{0,1\}^K, \, \|u(t)\|_1 \leq 1 \} \] consisting of functions whose activation patterns are elementary, or ``one-hot'', vectors. For each $u\in\mathcal{U}$ the endpoint map reduces to a composition of flows
	\begin{equation*}
		\phi(x(t), \Delta t, u) = \varphi^{(i_N)}_{\delta t_{i_N}} \circ \cdots \circ \varphi^{(i_1)}_{\delta t_{i_1}}(x(t))
	\end{equation*}
	where $0\leq N<\infty$ is the number of switchings of the control $u\in\mathcal{U}$, and $i_j\in[K]$ is the index of the vector field along which the system flows for time $\delta t_{i_j}$ before switching to the flow of $V_{i_{j+1}}$ for all $1\leq j\leq K-1$. For all $u\in\mathcal{U}$ restrict the switching times to occur uniformly on the boundaries of intervals $[k\tfrac{\Delta t}{N}, (k+1)\tfrac{\Delta t}{N}]$ for integers $0\leq k\leq N$, with the maximum number of switching times $N$ fixed and given. An algorithm that identifies points in the attainable set with ordered $N$-sequences of indices defines a compressor $f: \Att{x(t)}{\mathcal{V},\mathcal{U}}(\Delta t) \to [K]^N$, $f(\Delta x(t)) = (i_1, \ldots, i_N)=:I$. The decompressor $g:[K]^N\to\Att{x(t)}{\mathcal{V},\mathcal{U}}(\Delta t)$, 
	\begin{gather*}
		g(i_1, \ldots, i_N) = \varphi^{(i_N)}_{\Delta t/N} \circ \cdots \circ \varphi^{(i_1)}_{\Delta t/N}(x(t))
	\end{gather*}
	maps a sequence of indices to the attainable set. Let $\mathsf{U}:=[K]^N$ and $\mathsf{V}:=\Att{x(t)}{\mathcal{V},\mathcal{U}}(\Delta t)$. Defining suitable extensions $f^L:\mathsf{V}^L\to\mathsf{U}^L$, $g^L:\mathsf{U}^L\to\mathsf{V}^L$ to operate on blocks of size $L$ of iid observed state changes $\Delta x^L(t) = \{\Delta x^{(i)}(t)\}_{i=1}^L$ and $I^L:=\{I^{(i)}\}_{i=1}^L$ of $N$-sequences of indices would yield a $(K^N, L)$-source code for (\ref{eqn:stoch-lin-sys}) with code rate $R = \tfrac{N}{L}\log_2(K)$ bits/symbol. The integer $N$ is seen as an upsampling ratio from sample rate $f_s=1/\Delta t$ to $N/\Delta t$. Although the minimum admissible code rate for an LTI system decreases with $\Delta t$ (Corollary \ref{cor:sampling-rate-capacity}), the code rate of this emulating system does not depend on $\Delta t$ and is kept low by reducing $N$ and increasing the block size $L$. 


	%

	\parsec{Constant vector field approximations} An alternative source code with different de/compressor maps $f$, $g$ can be constructed from a source family $\mathcal{V}$ of constant vector fields, or of vector fields that are approximately constant for sufficiently small $\Delta t$. Let the admissible controls $\mathcal{U}$ be given as in Definition \ref{defn:emu-sys}. The endpoint map (\ref{eqn:emu-sys-endpoint-map}) for this emulating system reduces to
	\begin{equation} \label{eqn:emu-sys-endpoint-map-const}
		\Delta x(t) = \sum_{i=1}^K V_i(x(t)) \int_t^{t+\Delta t}\hspace*{-1.5em} u_i(\tau)\, d\tau 
	\end{equation}
	Let $\delta t_i := \int_t^{t+\Delta t} \hspace*{-1.5em} u_i(\tau)\, d\tau$ denote the total time spent flowing along vector field $V_i\in\mathcal{V}$, and $p_i(t) := \delta t_i/Z(t)$ the relative fraction of time, where $Z(t):=\sum_{i=1}^K \delta t_i \leq \Delta t$. We have thus defined a decompressor $g: \Delta^K \times \R_{\geq0} \to \Att{x(t)}{\mathcal{V},\mathcal{U}}(\Delta t)$ mapping a ``normalizing constant'' $Z(t)$ and a probability vector $p(t) \in \Delta^K := \{ p\in\R^K \suchthat \sum_{i=1}^K p_i = 1, \, p_i \geq 0\}$ to the attainable set,
	\begin{equation} \label{eqn:emu-sys-endpoint-map-simplex}
		\Delta \hat{x}(t) = g(p(t), Z(t)) = Z(t) \sum_{i=1}^K V_i(x(t)) \, p_i(t) .
	\end{equation} %
	In Subsection \ref{sec:emu-sys-data-based} we define a compressor $f: \Att{x(t)}{\mathcal{V},\mathcal{U}}(\Delta t) \to \Delta^K \times \R_{\geq0}$ as the solution of a linear program, and use the maps $f,g$ to simulate unknown dynamical systems from observations.

	Now fix an integer $N > 0$ and consider $K$ nonnegative integers $n(t) := (n_1(t), \ldots, n_K(t))$ such that $N = \sum_{i=1}^K n_i(t)$. Viewing $n_i(t)/N$ as a rational approximation of $p_i(t)$ and assuming $Z(t)=\Delta t$, we obtain a reproduction $\Delta\hat{x}(t) = g\circ f(\Delta x(t))$ where the compressor assigns $K$ nonnegative integers to each observed change of state, $f(\Delta x(t)) = n(t)$, and the decompressor maps the $K$ integers to the attainable set $g(n(t)) = \Delta t \sum_{i=1}^K V_i(x(t)) \, n_i(t)/N$. The code rate of this source code depends on the number of ways of uniquely matching an integer $n_i(t)$ with a vector field $V_i \in \mathcal{V}$. 
	
	\begin{figure}[t!]
		\centering
		\includegraphics[width=.775\linewidth]{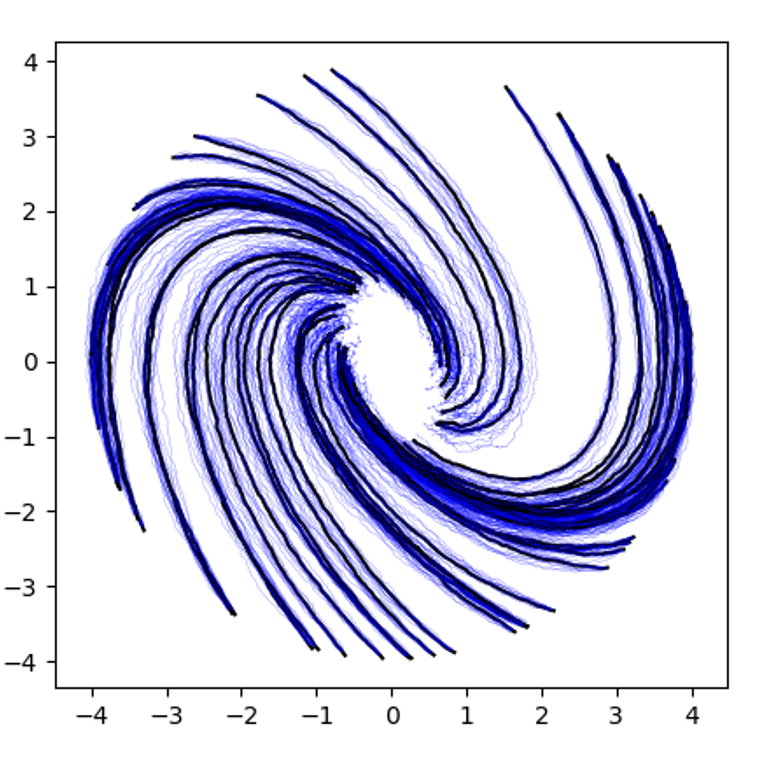} \vspace*{-1em}
		\caption{ Sample paths (blue) of a time-invariant stochastic linear system with system matrix (a) in Figure \ref{fig:ex-lti-code-rates-dx} are plotted against trajectories generated by the multinomial sampling scheme (black) enabled by identification of points in the attainable set with points in the standard $K$-simplex. At this scale the emulated sample paths are visually indistinguishable from the training data.}
		\label{fig:emu-sys-gend-trajs}
		\vspace*{-1em}
	\end{figure}

	\subsection{Data-based emulation of unknown systems} \label{sec:emu-sys-data-based}
	
	Suppose we conduct an experiment consisting of $L$ independent trials culminating in a 
	``training dataset'' \[ \mathcal{X} = \{ x^{(i)}(k\Delta t) \given{} k\in[T], \, i\in[L] \} \] of observations of an unknown dynamical system. Our task is to 
	generate a new trajectory $\{\tilde{x}(k\Delta t) \}_{k=0}^T$ that, in some sense, would closely resemble the result of an $(L+1)$\textsuperscript{th} trial.
	
	We propose an approach based on multinomial sampling, assuming $\Delta x^{(i)}(t) \in \Att{x^{(i)}(t)}{\mathcal{V},\mathcal{U}}(\Delta t)$ for all $i\in[L]$. First define a compressor $f(\Delta x(t)) = (\delta t^*/Z(t), Z(t))$ using the solution of the linear program \vspace*{-.5em}
	\begin{equation*} 
		\delta t^* = \arg\min_{\delta t\geq0} \: \|\delta t \|_1 \ \ 
		\text{s.t.} \ \  \Delta x(t) = \sum_{j=1}^K V_j\,\delta t_j \vspace*{-.25em} 
	\end{equation*} %
	where $Z(t)=\sum_j \delta t^*_j$. 
	At each $t \in \{k\Delta t\}_{k=1}^T$, generate a new $\Delta \tilde{x}(t)$ as follows:
	\begin{enumerate}
		\item Compute $(p^{(i)}(t), Z^{(i)}(t)) = f(\Delta x^{(i)}(t))$ for each $i\in[L]$, and average: $\tilde{p}(t)=\tfrac{1}{L}\sum_{i=1}^L p^{(i)}(t)$, $\tilde{Z}(t)=\tfrac{1}{L}\sum_{i=1}^L Z^{(i)}(t)$.
		\item Sample $\tilde{n}(t) \sim \mathsf{Mult}(n \given{} N, \tilde{p}(t))$ from the multinomial distribution for $N$ independent trials with $K$ outcomes; $\tilde{p}_j(t)= \mathsf{Pr}(V(t) = V_j)$ is the probability of selecting $V_j\in\mathcal{V}$ in one trial.
		\item Compute $\Delta\tilde{x}(t):=g(\tilde{n}_i(t)/N, \tilde{Z}(t))$ using (\ref{eqn:emu-sys-endpoint-map-simplex}).
	\end{enumerate}
	We call this scheme ``non-parametric'' because the generation of new trajectories does not require us to tune or identify system parameters such as poles, zeros, or process noise covariances, nor integrate any differential equations. Figure \ref{fig:emu-sys-gend-trajs} shows the performance of this scheme on training data collected from a stable LTI system. The emulating system (5) of \cite{sunNeuromimeticLinearSystems2022} has as its source family a set of 24 constant ``vector fields'' \[ \mathcal{V} = \Big \{ \vctr{-2 \\ -2}, \vctr{-2\\-1}, \cdots, \vctr{2\\1}, \vctr{2\\2} \Big\}. \] The training data $\mathcal{X}$ consists of sample paths of the time-invariant stochastic linear system with process noise intensity $N = 0.01I_2$ and the stable $A$ matrix of Figure \ref{fig:ex-lti-code-rates-dx}(a). Trajectories are emulated for 3 seconds at a sampling rate of $f_s = 100$ Hz. With admissible mean square distortion $D=0.01$ 
	the minimum required code rate (Proposition \ref{prop:lti-code-rate-dx}) is 2 bits/symbol per channel use, or a data rate of 200 bits/symbol/sec. %

	\section{Conclusion}
	We have defined the complexity of a sampled data representation of a linear stochastic system to be the minimum admissible code rate of a source code for its forward increments. The complexity is a quantity of relevance in applications requiring local and dynamic decision-making. In a context requiring communication of the change in state of a given linear stochastic control system over a channel of fixed capacity we proposed the minimum sampling rate required to lower the system complexity below the channel capacity as a measure of the degree to which an ``attention-varying controller'' \cite{brockettMinimumAttentionControl1997} could operate with or without feedback. We constructed explicit source codes from the endpoint maps of emulating systems, and illustrated their use in data-based, non-parametric simulation and analysis of unknown dynamical systems. Further applications of emulating systems to estimation and control tasks in both the infinitesimal setting \cite{weissmanDirectedInformationCausal2013} and in a sequential encoding context \cite{stavrouTimeInvariantMultidimensionalGaussian2020} will appear in future work.

	\bibliographystyle{plain}
	\bibliography{refs}
\end{document}